\date{
April 13, 2014%
}
\newcommand{\signum}{\operatorname{sgn}}
\newcommand{\Diag}[1]{\operatorname{Diag}(#1)}
\newcommand{\w}{{\bm w}}
\newcommand{\x}{{\bm x}}
\newcommand{\Gauss}{Gau\ss}
\newcommand{\1}{\mathbf 1}
\renewcommand{\epsilon}{\varepsilon}
\renewcommand{\emptyset}{\varnothing}
\newtheorem{definition}{Definition}
\newtheorem{proposition}{Proposition}
\newtheorem{theorem}{Theorem}
\newtheorem{corollary}{Corollary}
\newcounter{noqed}
\newcommand{\qed}{ \ifmmode\mbox{ }\fi\rule[-.05em]{.3em}{.7em}\setcounter{noqed}{0}}
\newenvironment{proof}[1][{}]{\noindent{\bf Proof#1. }\setcounter{noqed}{1}}{\ifnum\value{noqed}=1\qed\fi\par\medskip}
\title{Supremum--Norm Convergence for Step--Asynchronous Successive
Overrelaxation on M-matrices}
\author{Sebastiano Vigna\thanks{The author was supported by the EU-FET
grant NADINE (GA 288956).}}
\begin{document}
\bibliographystyle{alpha}

\maketitle

\begin{abstract}
Step-asynchronous successive overrelaxation updates the values contained in
a single vector using the usual \Gauss--Seidel-like weighted rule, but 
arbitrarily mixing old and new values, the only constraint being temporal
coherence---you cannot use a value before it has been
computed. We show that given a nonnegative real matrix
$A$, a $\sigma\geq\rho(A)$ and a vector $\w>0$ such that $A\w\leq\sigma\w$, every iteration of
step-asynchronous successive overrelaxation for the problem $(sI- A)\x=\bm
b$, with $s >\sigma$, reduces geometrically the $\w$-norm of the current error
by a factor that we can compute explicitly. Then, we show that given a
$\sigma>\rho(A)$ it is in principle always possible to compute such a $\w$.
This property makes it possible to estimate the supremum norm of the absolute
error at each iteration without any additional hypothesis on $A$, even when $A$
is so large that computing the product $A\bm x$ is feasible, but estimating the
supremum norm of $(sI-A)^{-1}$ is not.
\end{abstract}

\noindent\textbf{Mathematical Subject Classification:} 65F10  	(Iterative
methods for linear systems)

\noindent\textbf{Keywords:} Successive overrelaxation; M-matrices; asynchronous
iterative solvers

\section{Introduction}

We are interested in providing \emph{computable absolute bounds in $\ell_\infty$ norm} 
on the convergence of
a mildly asynchronous version of successive overrelaxation (SOR)
applied to problems of the form $(s  I- A)\bm x = \bm b$, where $A$ is a
nonnegative real matrix and $s >\rho(A)$. A matrix of the form $s  I- A$ under
these hypotheses is called a \emph{nonsingular M-matrix}~\cite{BePNMMS}.

We stress from the start that there are no other hypotheses on $A$ such as
irreducibility, symmetry, positive definiteness or (weak) 2-cyclicity, and that
$A$ is assumed to be very large---so large that computing $A\bm x$ (or
performing a SOR iteration) is feasible (maybe streaming over the matrix
entries), but estimating $\bigl\|(sI-A)^{-1}\bigr\|_\infty$ is not.

Our main motivation is the parallel computation with arbitrary guaranteed
precision of various kinds of \emph{spectral rankings with
damping}~\cite{VigSR}, most notably Katz's index~\cite{KatNSIDSA} and
PageRank~\cite{PBMPCR}, which are solutions of problems of the form above with
$A$ derived from the adjacency matrix of a very large graph, the only relevant
difference being that the rows of $A$ are $\ell_1$-normalized in the case of PageRank.

By ``computable'' we mean that there must be a finite computational process that
provides a bound on $\bigl\|\bar{\bm x}-\bm x^{(t)}\bigr\|_\infty$,
where $\bar{\bm x}$ is the solution and $\bm x^{(t)}$ is the $t$-th
approximation. Such a bound would make it possible to claim that we
know the solution up to some given number of significant fractional digits. For
example, without further assumptions on $A$ convergence results based on the
spectral radius are not computable in this sense and results concerning the
residual are not applicable because of the unfeasibility of estimating
$\bigl\|(sI-A)^{-1}\bigr\|_\infty$. 

We are also interested in highly parallel versions for modern multicore systems.
While SOR and other iterative methods are apparently strictly sequential
algorithms, there is a large body of literature that studies what happens when
updates are executed in arbitrary order, mixing old and new values. Essentially,
as long as old values come from a finite time horizon (e.g., there is a finite
bound on the ``oldness'' of a value) convergence has been proved for all major
standard sequential hypothesis of convergence\footnote{It is a bit surprising,
indeed, that the statement that \Gauss--Seidel is difficult to parallelize
appears so often in the literature. In a sense, an algorithm updating in
arbitrary order using possibly old values is not any longer \Gauss--Seidel. On
the other hand, this is exactly what one expects when asking the question ``is
\Gauss--Seidel parallelizable''?} (for the main results, see the sections
about \emph{partial asynchrony} in Bertsekas and Tsitsiklis's encyclopedic book~\cite{BeTPDCNM}).

Again, however, results are always stated in terms of convergence in the
limit, and the speed of convergence, which decays as the time horizon gets
larger, often cannot be stated explicitly. Moreover, the theory is modeled
around message-passing systems, where processor might actually use very old
values due to transmission delays. In the multicore, shared-memory system
application we have in mind it is reasonable to assume that after each iteration
memory is synchronized and all processors have the same view.

Our main motivation is obtaining (almost) ``noise-free'' scores to perform
accurate comparisons of the induced rankings using Kendall's $\tau$~\cite{KenTTRP}:
\[\tau(\bm r, \bm s)
:=\frac{\sum_{i<j}\signum(r_i-r_j)\signum(s_i-s_j)}{\sqrt{\sum_{i<j}\signum(r_i-r_j)^2}
\sqrt{\sum_{i<j}\signum(s_i-s_j)^2}}. \] 
Computational noise can be quite problematic in evaluating Kendall's $\tau$ because
the signum function has no way to distinguish large and small differences---they are all mapped to $1$ or $-1$~\cite{BPSTPTP}.

Suppose, for example, that we have a
graph with a large number $n$ of nodes, and some centrality index that assigns score $0$ the first
$n/2$ nodes and score $1$ the remaining nodes. Suppose we have also another
index assigning the same scores, and that this new index is defined
by an iterative process, which is stopped at some point (e.g., an iterative
solver for linear systems). If the computed values
include computational random noise and evaluate $\tau$ on the two
vectors, we will obtain a $\tau$ close to $1/\sqrt2\approx 0.707$, even if the ranks are
perfectly correlated. On the other hand, with a sufficiently small guaranteed
absolute error we can proceed to truncate or round the second set of scores, obtaining
a result closer to the real correlation.

This scenario is not artificial: when comparing, for instance, indegree
with an index computed iteratively (e.g., Katz's index, PageRank, etc.), we have
a similar situation. Surprisingly, the noise from iterative computations can
even \emph{increase} correlation (e.g., between the dominant eigenvector of a graph
that is not strongly connected and Katz's index, as the residual score in nodes
whose actual score is zero induces a ranking similar to that induced by
Katz's index).

In this paper, we provide convergence bounds in $\ell_\infty$ norm for SOR iterations for
the problem $(s  I-A)\bm x = \bm b$, where $A$ is a nonnegative real matrix
and $s  > \rho(A)$, in conditions of mild asynchrony, without any
additional hypothesis on $A$. Our main result are Theorem~\ref{teo:conv}, which shows 
that given a $\sigma <s$ and a vector $\w>0$ such that $A\w\leq\sigma\w$
SOR iterations reduce geometrically the $\w$-norm of the error (with a
computable contraction factor), and Theorem~\ref{teo:suitable}, which shows how
to compute such a $\w$ using only iterated products of $A$ with a vector. The
two results can be viewed as a constructive and computable version of the 
standard convergence results on SOR iteration based on the spectral radius.

We remark that SOR is actually not useful for PageRank, as shown recently by
Greif and Kurokawa~\cite{GrKNCSPP}. The author has found experimentally that the
same phenomenon plagues the computation of Katz's index. However, since
generalizing from \Gauss--Seidel to SOR does not bring any significant increase
in complexity in the proof, we decided to prove our results in the more general
setting.

\section{Step-asynchronous SOR}
\label{sec:SOR}

We now define \emph{step-asynchronous} SOR for the problem $(s  I-A)\x=\bm b$.
In general, \emph{asynchronous} SOR computes new values using arbitrarily old values; in this case, the hypotheses for convergence
are definitely stronger. In the 
\emph{partially asynchronous} case, instead, there is a 
finite limit on the ``oldness'' of the values used to compute new values, and while there is a decrease
in convergence speed, the hypotheses for convergence are essentially the same of
the sequential case~(see \cite{BeTPDCNM} for more details).

Step-asynchronous SOR uses the strictest possible time bound: one step. We thus perform a SOR-like update in arbitrary order:
\begin{equation}
\label{eq:gs}
x_i^{(t+1)} = (1-\omega)x_i^{(t)}+
\frac\omega{s -a_{ii}}\Bigl(b_i+\sum_{j\in N_i^{(t)}}a_{ij}x_j^{(t+1)}+
\sum_{j\in P_i^{(t)}\setminus\{\,i\,\}}a_{ij}x_j^{(t)}\Bigr). 
\end{equation}
The only constraint 
is that for each iteration an \emph{update total preorder}\footnote{A \emph{total preorder} is a 
set endowed with a reflexive and transitive total relation. We remark that a choice of a sequence
of such preorders is equivalent to a \emph{scenario} in the terminology of~\cite{BeTPDCNM}.} $\preceq^{(t)}$ of
the indices is given: $i\preceq^{(t)} j$ iff $x_i$ is updated before (or at the
same time of) $x_j$ at iteration $t$, and the set
$P_i^{(t)}$ of the indices for which we use the \emph{previous}
values is such that for all $j\succeq^{(t)} i$ we have $j\in P_i^{(t)}$, whereas
$N_i^{(t)}= n\setminus P_i{(t)}$ is the set indices for which we use the \emph{next}
values. Essentially,
we \emph{must} use previous values for all variables that are updated at the
same time of $x_i$ or after $x_i$, but we make no assumption on the remaining
variables. In this way we take into account cache incoherence, unpredictable
scheduling of multiple threads, and so on.\footnote{For example, if we have
exactly $n$ parallel updates at the same time we would have, in fact, a Jacobi
iteration: in that case, $N_i^{(t)}=\emptyset$ for all $i$.}

Matrixwise, the set $N_i^{(t)}$ induces a nonnegative
matrix $L^{(t)}$ given by
\[
L^{(t)}_{ij} = \Bigl[j\in N^{(t)}_i\Bigr]a_{ij}
\]
and a \emph{regular splitting}
\[
s  I-  A= \bigl(D- L^{(t)} \bigr) - R^{(t)},
\]
where $D=sI- \Diag A$ and $R^{(t)}$ is nonnegative with zeros on the diagonal.
Then, equation~(\ref{eq:gs}) can be rewritten as
\[
\bigl(D- \omega L^{(t)}\bigr)\x^{(t+1)} = (1-\omega)D\x^{(t)} +
\omega\bigl( \bm b +R^{(t)}\x^{(t)} \bigr).
\]
There is of course a permutation of row and columns (depending on $t$) such that 
$L^{(t)}$ is strictly lower triangular, but the only claim that can be made
about $R^{(t)}$ is that its diagonal is zero: actually, we could have
$L^{(t)}=0$ and $R^{(t)}=s  I - A - D$.

In particular, independently from the choice of $L^{(t)}$, if $\bar\x$ is a
solution we have as usual
\[
\bigl(D- \omega L^{(t)}\bigr)\bar\x = (1-\omega)D\bar\x +
\omega\bigl(\bm b  + R^{(t)}\bar\x \bigr)
\]
and
\begin{equation}
\label{eq:err}
\bigl(D- \omega L^{(t)}\bigr)\bigl(\bar\x-\x^{(t+1)}\bigr) =
(1-\omega)D\bigl(\bar\x-\x^{(t)}\bigr) + \omega
R^{(t)}\bigl(\bar\x-\x^{(t)}\bigr).
\end{equation}

\section{Suitability and convergence in $\w$-norm}
We now define suitability of a vector for a matrix, which will be the main tool in
proving our results. The idea is implicitly or explicitly at the core of several
classical proofs of convergence, and is closely related to that of \emph{generalized diagonal dominance}:
\begin{definition}
A vector $\w>0$ is \emph{$\sigma$-suitable} for $A$ if
$A\w\leq\sigma\w$.
\end{definition}

The usefulness of suitable vectors is that they induce norms
norms in which the decrease of the error caused by a SOR
iteration for of the problem $(sI-A) \x = \bm b$ can be controlled if
$s>\sigma$. If $A$ is irreducible, for instance, the dominant eigenvector is suitable
for the spectral radius, but it is exactly this kind of hypotheses that we want to avoid.
\begin{definition}
Given a vector $\w>0$, the $\w$-norm is defined by
\[\|\x\|^\w_\infty = \max_i\frac{|x_i|}{w_i}.\]
\end{definition}
The notation $\|\cdot\|_\infty^\w$ is used also for the operator norm induced
in the usual way.
We note a few useful properties---many others can be found in~\cite{BeTPDCNM}:
\begin{proposition}
\label{prop:wnorm}
Given a vector $\w$ that is $\sigma$-suitable for a nonnegative matrix $A$, the
following statements are true for all vectors $\x$:
\begin{enumerate}
\item\label{en:coord} $\bigl|x_i\bigr|\leq w_i\|\x\|_\infty^\w$;
\item\label{en:min} $\min_i w_i\|\x\|_\infty^\w\leq\|\x\|_\infty$;
\item\label{en:max} $\max_i w_i\|\x\|_\infty^\w\geq \|\x\|_\infty$;
\item\label{en:unit} $\|\w\|_\infty^\w=1$;
\item\label{en:Aw} $\|A\|_\infty^\w=\|A\w\|_\infty^\w$;
\item\label{en:mindef} if $\bm x\geq0$, $\|\bm
x\|^\w_\infty=\min\{\,\alpha\geq0\mid \x \leq \alpha \w\,\}$.
\item\label{en:suitable} $\|A\x\|_\infty^\w\leq\sigma\|\x\|_\infty^\w$; in
particular, $\rho(A)\leq\|A\|_\infty^\w\leq\sigma$.
\end{enumerate}
\end{proposition}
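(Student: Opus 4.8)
The plan is to obtain items~\ref{en:coord}--\ref{en:unit} and~\ref{en:mindef} directly from the definitions, and to reduce the two operator-norm statements~\ref{en:Aw} and~\ref{en:suitable} to the classical maximum-absolute-row-sum formula via a diagonal change of variables.

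For the elementary items: \ref{en:coord} is the inequality $|x_i|/w_i\leq\max_j|x_j|/w_j=\|\x\|_\infty^\w$ multiplied by $w_i>0$. For~\ref{en:min}, pick $k$ with $\|\x\|_\infty^\w=|x_k|/w_k$; then $\min_i w_i\cdot\|\x\|_\infty^\w\leq w_k\|\x\|_\infty^\w=|x_k|\leq\|\x\|_\infty$. For~\ref{en:max}, pick instead $k$ with $\|\x\|_\infty=|x_k|$; then $\|\x\|_\infty=w_k\bigl(|x_k|/w_k\bigr)\leq w_k\|\x\|_\infty^\w\leq\max_i w_i\cdot\|\x\|_\infty^\w$. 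Item~\ref{en:unit} is just $\max_i w_i/w_i=1$. For~\ref{en:mindef}, when $\x\geq0$ the condition $\x\leq\alpha\w$ is equivalent, dividing the $i$-th coordinate by $w_i>0$, to $\alpha\geq x_i/w_i$ for every $i$, i.e.\ to $\alpha\geq\|\x\|_\infty^\w$; hence $\|\x\|_\infty^\w$ is the least admissible $\alpha$.

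The substantive step is~\ref{en:Aw}. Here I would use that $\x\mapsto(\Diag\w)^{-1}\x$ is an isometry of $\bigl(\mathbb R^n,\|\cdot\|_\infty^\w\bigr)$ onto $\bigl(\mathbb R^n,\|\cdot\|_\infty\bigr)$, since $\bigl\|(\Diag\w)^{-1}\x\bigr\|_\infty=\max_i|x_i|/w_i=\|\x\|_\infty^\w$. Consequently $\|A\|_\infty^\w$ equals the ordinary induced $\infty$-norm of the conjugated matrix $(\Diag\w)^{-1}A\,\Diag\w$, which is its maximum absolute row sum $\max_i\frac1{w_i}\sum_j|a_{ij}|\,w_j$. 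Because $A$ is nonnegative and $\w$ positive, the absolute values may be dropped, giving $\max_i(A\w)_i/w_i$; and since $A\w\geq0$ this last quantity is by definition $\|A\w\|_\infty^\w$, which proves~\ref{en:Aw}.

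Finally~\ref{en:suitable} follows at once: $\sigma$-suitability gives $0\leq A\w\leq\sigma\w$, whence $\|A\w\|_\infty^\w\leq\sigma$ (e.g.\ by~\ref{en:mindef}), so~\ref{en:Aw} yields $\|A\|_\infty^\w\leq\sigma$; then $\|A\x\|_\infty^\w\leq\|A\|_\infty^\w\,\|\x\|_\infty^\w\leq\sigma\|\x\|_\infty^\w$ for every $\x$, and $\rho(A)\leq\|A\|_\infty^\w$ because the spectral radius is dominated by any induced operator norm. I expect the one point deserving care to be~\ref{en:Aw}: one must verify that the diagonal scaling genuinely transports the $\w$-norm to the standard $\infty$-norm (so that the operator norms correspond) and that nonnegativity of $A$ together with positivity of $\w$ is exactly what legitimizes removing the absolute values in the row-sum formula.
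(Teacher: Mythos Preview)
Your proof is correct and matches the paper in spirit. The paper dismisses items~\ref{en:coord}--\ref{en:mindef} as immediate and proves only item~\ref{en:suitable} via the direct chain $\|A\x\|_\infty^\w\leq\max_i\bigl(\sum_j a_{ij}w_j\bigr)\|\x\|_\infty^\w/w_i\leq\sigma\|\x\|_\infty^\w$; your route---establishing~\ref{en:Aw} first through the conjugation $(\Diag\w)^{-1}A\,\Diag\w$ and then deducing~\ref{en:suitable} from it---is a slightly more modular repackaging of the same row-sum computation.
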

\begin{proof}
The first claims are immediate from the definition of $\w$-norm. For the last
claim,
\[
\|A\x\|_\infty^\w =\max_i\left|\frac{\sum_j a_{ij}x_j}{w_i}\right|\leq
\max_i\frac{\sum_j a_{ij}|x_j|}{w_i}=
\max_i\frac{\sum_j
a_{ij}w_j\|\x\|_\infty^\w}{w_i}\leq\sigma\|\x\|_\infty^\w.
\]
\end{proof}

The next theorem is based on the standard proof by induction of convergence
for SOR, but we make induction on the update time of a component rather than on its index,
and we use suitability to provide bounds to the norm of the error.

\begin{theorem}
\label{teo:conv}
Let $A$ be a nonnegative matrix and let $\w$
be $\sigma$-suitable for $A$.
Then, given $s>\sigma$ step-asynchronous SOR for the problem $(s  I- A)\x=\bm b$
converges for 
\[
0<\omega< \frac2{\displaystyle 1+\max_k\frac{\sigma -a_{kk}}{s-a_{kk}}} %
\]
 and letting $\bar\x=(sI-A)^{-1}\bm b$ we have
\[
\big\|\bar\x - \x^{(t+1)}\bigr\|_\infty^\w
\leq r\bigl\|\bar\x -
\x^{(t)}\bigr\|_\infty^\w,
\]
where 
\[
r=|1-\omega| +\omega\max_k\frac{\sigma -a_{kk}}{s-a_{kk}}<1.
\]
Moreover, 
\[
\bigl\|\bar\x - \x^{(t+1)}\bigr\|_\infty^\w
\leq\frac r{1-r}
\bigl\|\x^{(t+1)} - \x^{(t)}\bigr\|_\infty^\w.
\]
\end{theorem}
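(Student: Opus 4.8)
The plan is to reduce the statement to a componentwise estimate proved by induction on the update total preorder $\preceq^{(t)}$, imitating the classical inductive proof of convergence of SOR but ordering the components by \emph{when} they are updated rather than by index. Fix $t$, write $\mathbf e=\bar\x-\x^{(t)}$ and $\mathbf e'=\bar\x-\x^{(t+1)}$, and set $E=\|\mathbf e\|_\infty^\w$, so that $|e_j|\le w_jE$ for every $j$ by Proposition~\ref{prop:wnorm}(\ref{en:coord}). First note that $\w$-suitability forces $a_{ii}w_i\le\sum_j a_{ij}w_j\le\sigma w_i$, hence $a_{ii}\le\sigma<s$ and $D_{ii}=s-a_{ii}>0$; dividing row $i$ of the error equation~(\ref{eq:err}) by $s-a_{ii}$ yields the scalar recurrence
\[
e'_i=(1-\omega)e_i+\frac{\omega}{s-a_{ii}}\Bigl(\sum_{j\in N_i^{(t)}}a_{ij}e'_j+\sum_{j\in P_i^{(t)}\setminus\{i\}}a_{ij}e_j\Bigr).
\]
Writing $m=\max_k(\sigma-a_{kk})/(s-a_{kk})$ we have $0\le m<1$, since each quotient is nonnegative and strictly below $1$ as $\sigma<s$; a short case analysis ($r=1-\omega(1-m)$ when $\omega\le1$, and $r=\omega(1+m)-1<\tfrac2{1+m}(1+m)-1=1$ when $1<\omega<2/(1+m)$) shows $r\in[0,1)$, which is the last clause of the statement.

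Next I would prove, by induction along the classes of $\preceq^{(t)}$, that $|e'_i|\le r\,w_iE$ for every $i$. In the step for a component $i$, each $j\in N_i^{(t)}$ satisfies $j\prec^{(t)}i$ (because $P_i^{(t)}$ contains every index $\succeq^{(t)}i$), so $|e'_j|\le r\,w_jE$ by the inductive hypothesis, while for $j\in P_i^{(t)}\setminus\{i\}$ we only know $|e_j|\le w_jE$. Using $a_{ij}\ge0$, the already established bound $r\le1$, and the fact that $N_i^{(t)}$ together with $P_i^{(t)}\setminus\{i\}$ exhausts the indices $j\ne i$, the bracketed term is bounded by $\sum_{j\ne i}a_{ij}w_jE\le(\sigma-a_{ii})w_iE$, again by $\w$-suitability. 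Hence
\[
|e'_i|\le|1-\omega|\,w_iE+\frac{\omega(\sigma-a_{ii})}{s-a_{ii}}\,w_iE\le(|1-\omega|+\omega m)\,w_iE=r\,w_iE,
\]
which closes the induction; taking the maximum over $i$ of $|e'_i|/w_i$ gives $\|\mathbf e'\|_\infty^\w\le rE$, the first displayed inequality. Iterating it gives $\|\bar\x-\x^{(t)}\|_\infty^\w\le r^t\|\bar\x-\x^{(0)}\|_\infty^\w\to0$, and Proposition~\ref{prop:wnorm}(\ref{en:min}) then yields $\|\bar\x-\x^{(t)}\|_\infty\to0$; the solution $\bar\x=(sI-A)^{-1}\bm b$ exists and is unique because $s>\rho(A)$.

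For the a posteriori bound I would use the triangle inequality for the norm $\|\cdot\|_\infty^\w$: from $\mathbf e=\mathbf e'+(\x^{(t+1)}-\x^{(t)})$ and $\|\mathbf e'\|_\infty^\w\le r\|\mathbf e\|_\infty^\w$ we get $\|\mathbf e'\|_\infty^\w\le r\|\mathbf e'\|_\infty^\w+r\|\x^{(t+1)}-\x^{(t)}\|_\infty^\w$, and dividing by $1-r>0$ gives $\|\mathbf e'\|_\infty^\w\le\frac r{1-r}\|\x^{(t+1)}-\x^{(t)}\|_\infty^\w$.

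The only place where real care is needed is the bookkeeping of the induction: it must run over components ordered by update time, not by index, and it relies on exactly the defining property of $P_i^{(t)}$ (that it contains every index updated simultaneously with or after $i$) to guarantee that every ``new'' value $e'_j$ occurring in the recurrence for $e'_i$ has already been estimated. The rest is routine; the two small points to keep in mind are that $\w$-suitability is what makes the denominators $s-a_{ii}$ positive and the numerators $\sigma-a_{ii}$ nonnegative, and that $r\le1$---needed in order to drop the factor $r$ on the $N_i^{(t)}$ terms---follows from the preliminary case analysis on $\omega$, so there is no circularity.
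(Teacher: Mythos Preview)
Your proposal is correct and follows essentially the same route as the paper: componentwise induction along the update preorder $\preceq^{(t)}$, using $\w$-suitability to bound $\sum_{j\ne i}a_{ij}w_j$ and the preliminary verification $r<1$ to absorb the inductive estimate on the ``new'' terms, followed by the triangle-inequality argument for the a posteriori bound. The only cosmetic difference is that the paper carries the $i$-dependent factor $|1-\omega|+\omega(\sigma-a_{ii})/(s-a_{ii})$ through the induction before passing to the maximum, whereas you work directly with $r$; the logic is identical.
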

\begin{proof}
Let $\preceq^{(t)}$ be a sequence of update orders, and 
$P_i{(t)}$ a sequence of previous-value sets, one for each step $t$ and
variable index $i$, compatible with the respective update orders.
We work by induction on the order
$\preceq^{(t)}$, proving the statement
\begin{equation}
\label{eq:ind}
\bigl| e_i^{(t+1)}\bigr|\leq\biggl( 
|1-\omega|+
\omega\frac{\sigma-a_{ii}}{s-a_{ii}}\biggr)w_i\bigl\|\bm
e^{(t)}\bigr\|_\infty^\w,
\end{equation}
where $\bm e^{(t)}=\bar\x - \x^{(t)}$, assuming it is true for all
$k\prec^{(t)}i$.

Note that for all $i$
\[
0<\frac{\sigma-a_{ii}}{s-a_{ii}}<1,
\]
so for $0<\omega\leq 1$
\[
|1-\omega|+
\omega\frac{\sigma-a_{ii}}{s-a_{ii}} = 1 -
\omega\biggl(1-\frac{\sigma-a_{ii}}{s-a_{ii}}\biggr)<1,
\]
and analogously for
\[
1\leq\omega<\frac2{\displaystyle 1+ \max_k\frac{\sigma -a_{kk}}{s-a_{kk}}} %
\]
we have
\[
|1-\omega|+
\omega\frac{\sigma-a_{ii}}{s-a_{ii}} = 
\omega\biggl(1+\frac{\sigma-a_{ii}}{s-a_{ii}}\biggr)-1<
\frac2{\displaystyle 1+\max_k\frac{\sigma
-a_{kk}}{s-a_{kk}}}\biggl(1+\frac{\sigma-a_{ii}}{s-a_{ii}}\biggr)-1<1.
\]

Writing explicitly~(\ref{eq:err}) for the $i$-th coordinate, we have
\begin{align*}
\bigl|e_i^{(t+1)}\bigr| &= \biggl|(1-\omega)e_i^{(t)}+
\frac\omega{s -a_{ii}}\Bigl(\sum_{j\in N_i^{(t)}} a_{ij}e_j^{(t+1)}
+\sum_{j\in P_i^{(t)}\setminus\{\,i\,\}} a_{ij}e_j^{(t)}\Bigr) 
\biggr|.
\end{align*}
Since $j\in N_i^{(t)}$ implies by definition $j\prec^{(t)}i$, we can apply the
induction hypothesis on $e_j^{(t+1)}$ to state that $e_j^{(t+1)}\leq w_j\bigl\|\bm e^{(t)}\bigr\|_\infty^\w$.
The same bound applies to $e_j^{(t)}$ using
the first statement of Proposition~\ref{prop:wnorm}. 

We now notice that $\sigma$-suitability implies 
\[(A-\Diag A)\w \leq (\sigma I-\Diag A)\w,\]
which in coordinates tells us that
\[\sum_{j\neq i} a_{ij} w_j \leq \bigl(\sigma-a_{ii})w_i.\]
Thus,
\begin{align*}
\bigl|e_i^{(t+1)}\bigr| &\leq 
\biggl(|1-\omega|w_i +
\omega\frac1{s-a_{ii}} \Bigl(\sum_{j\in N_i^{(t)}}
a_{ij}w_j +\sum_{j\in P_i^{(t)}\setminus\{\,i\,\}} a_{ij}w_j\Bigr) 
\biggr)\bigl\|\bm e^{(t)}\bigr\|_\infty^\w\\
&\leq\biggl(|1-\omega|+
\omega\frac{\sigma-a_{ii}}{s-a_{ii}} 
\biggr)w_i\bigl\|\bm e^{(t)}\bigr\|_\infty^\w.
\end{align*}
By the very definition of $\w$-norm,~(\ref{eq:ind}) yields
\[
\bigl\| \bm e_i^{(t+1)}\bigr\|_\infty^\w\leq\biggl( 
|1-\omega|+
\omega\max_k\frac{\sigma-a_{kk}}{s-a_{kk}}\biggr)\bigl\|\bm
e^{(t)}\bigr\|_\infty^\w.
\]

For the second statement, we have
\begin{multline*}
 \bigl\| \x- \x^{(t)}\bigr\|_\infty^\w - \bigl\| \x^{(t+1)} -
 \x^{(t)}\bigr\|_\infty^\w   \leq \bigl\| \x - \x^{(t+1)} + \x^{(t)} - \x^{(t)}\bigr\|_\infty^\w\\= \bigl\|\x - \x^{(t+1)}\bigr\|_\infty^\w\leq r \bigl\|\x - \x^{(t)}\bigr\|_\infty^\w,
\end{multline*}
whence 
\[
\bigl\|\x - \x^{(t+1)}\bigr\|_\infty^\w \leq r \bigl\|\x -
\x^{(t)}\bigr\|_\infty^\w \leq\frac r{1-r} \bigl\|\x^{(t+1)} -
\x^{(t)}\bigr\|_\infty^\w.
\]
\end{proof}

We remark that the smallest contraction factor is obtained when $\omega=1$, that is, with no relaxation. This does not mean,
however, that relaxation is not useful: convergence might be faster with $\omega\neq 1$; it is just that the error bound we
provide features the best constant when $\omega=1$.

\begin{corollary}
\label{cor:bound}
With the same hypotheses and notation of Theorem~\ref{teo:conv}, step-asynchronous \Gauss--Seidel iterations converge and
\[
\bigl\|\bar\x - \x^{(t+1)}\bigr\|_\infty^\w
\leq\frac{\displaystyle \max_k\frac{\sigma
-a_{kk}}{s-a_{kk}}}{\displaystyle 1-\max_k\frac{\sigma
-a_{kk}}{s-a_{kk}}}
\bigl\|\x^{(t+1)} - \x^{(t)}\bigr\|_\infty^\w.
\]
\end{corollary}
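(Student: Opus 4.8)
The plan is to read off the corollary as the special case $\omega=1$ of Theorem~\ref{teo:conv}, since step-asynchronous \Gauss--Seidel is precisely step-asynchronous SOR with no relaxation. First I would check that $\omega=1$ lies in the admissible interval for $\omega$. Because $a_{kk}\geq 0$ and $\sigma<s$, every ratio $(\sigma-a_{kk})/(s-a_{kk})$ is strictly less than $1$ (this elementary estimate already appears inside the proof of Theorem~\ref{teo:conv}), so $1+\max_k (\sigma-a_{kk})/(s-a_{kk})<2$ and hence
\[
\frac2{\displaystyle 1+\max_k\frac{\sigma-a_{kk}}{s-a_{kk}}}>1.
\]
Thus $\omega=1$ is admissible, and Theorem~\ref{teo:conv} applies: in particular step-asynchronous \Gauss--Seidel converges.

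Next I would evaluate the contraction factor at $\omega=1$. Directly from the definition of $r$ in Theorem~\ref{teo:conv},
\[
r=|1-1|+1\cdot\max_k\frac{\sigma-a_{kk}}{s-a_{kk}}=\max_k\frac{\sigma-a_{kk}}{s-a_{kk}},
\]
which lies in $[0,1)$ by the same bound used above. Substituting this value of $r$ into the ``moreover'' inequality of Theorem~\ref{teo:conv}, namely $\bigl\|\bar\x-\x^{(t+1)}\bigr\|_\infty^\w\leq\frac r{1-r}\bigl\|\x^{(t+1)}-\x^{(t)}\bigr\|_\infty^\w$, and simplifying $r/(1-r)$ produces exactly the displayed bound of the corollary. (This is also consistent with the remark immediately preceding the corollary, that $\omega=1$ yields the best constant in our error estimate, which is what makes the \Gauss--Seidel case worth stating separately.)

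I do not expect any genuine obstacle here: the whole argument is a substitution into an already proved statement. The only point requiring a word of care is the strict inequality $r<1$, needed so that $r/(1-r)$ is finite and the bound is meaningful; but, as noted, this is immediate from $(\sigma-a_{kk})/(s-a_{kk})<1$, which holds for every $k$ since $0\leq a_{kk}$ and $\sigma<s$.
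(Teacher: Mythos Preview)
Your proposal is correct and is exactly the intended derivation: the paper states this corollary without proof, as it is just the specialization $\omega=1$ of Theorem~\ref{teo:conv} (and the preceding remark that $\omega=1$ gives the smallest contraction factor makes this explicit). Your checks that $\omega=1$ is admissible and that $r=\max_k(\sigma-a_{kk})/(s-a_{kk})<1$ are precisely what is needed.
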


\begin{corollary}
\label{cor:bound}
Let $A$ be an irreducible nonnegative matrix and $\w$ its dominant eigenvector. Then the statement of Theorem~\ref{teo:conv} 
is true in $\w$-norm with $\sigma=\rho(A)$.
\end{corollary}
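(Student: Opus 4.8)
The plan is to observe that this statement is an immediate specialization of Theorem~\ref{teo:conv}: the only thing one needs to produce is a vector that is $\rho(A)$-suitable for $A$, and the Perron--Frobenius theorem supplies one.

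First I would recall that, $A$ being nonnegative and irreducible, the Perron--Frobenius theorem guarantees that $\rho(A)$ is an eigenvalue of $A$ with a strictly positive associated eigenvector $\w>0$, so that $A\w=\rho(A)\w$. In particular $A\w\leq\rho(A)\w$, which is exactly the condition that $\w$ be $\rho(A)$-suitable for $A$ in the sense of the definition above.

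Next I would check that the remaining hypothesis of Theorem~\ref{teo:conv} holds with the choice $\sigma:=\rho(A)$. The theorem requires $s>\sigma$; but $s>\rho(A)$ is precisely the standing assumption that $sI-A$ is a nonsingular M-matrix. Hence Theorem~\ref{teo:conv} applies verbatim with this $\w$ and this $\sigma$, and all of its conclusions---convergence of step-asynchronous SOR for $\omega$ in the range stated there, the geometric contraction of the $\w$-norm of the error by the factor $r$, and the a~posteriori bound in terms of $\bigl\|\x^{(t+1)}-\x^{(t)}\bigr\|_\infty^\w$---hold with $\sigma$ read as $\rho(A)$.

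There is no real obstacle here: the corollary is a one-line consequence of Perron--Frobenius combined with Theorem~\ref{teo:conv}. The only point worth emphasizing is why irreducibility cannot be dropped in this clean form: for a reducible nonnegative matrix the dominant eigenvector may have zero components and therefore does not induce a $\w$-norm, so in the general case one cannot simply take the Perron eigenvector and must instead manufacture a strictly positive suitable vector by other means---which is exactly the task addressed by Theorem~\ref{teo:suitable}.
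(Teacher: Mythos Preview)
Your argument is correct and is exactly the one the paper leaves implicit: the corollary is stated without proof because Perron--Frobenius immediately yields a strictly positive $\w$ with $A\w=\rho(A)\w$, so $\w$ is $\rho(A)$-suitable and Theorem~\ref{teo:conv} applies with $\sigma=\rho(A)$. Your added remark on why irreducibility is needed here (and why Theorem~\ref{teo:suitable} is required in the reducible case) is accurate and matches the paper's motivation.
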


A simple consequence is that if we know a $\sigma$-suitable vector $\w$ for
$A$ we can just behave as if the step-asynchronous SOR is converging in the
standard supremum norm, but we have a reduction in the strength 
of the bound given by the ratio between the maximum
and the minimum component of $\w$:
\begin{corollary}
\label{cor:bound}
With the same hypotheses and notation of Theorem~\ref{teo:conv}, step-asynchronous \Gauss--Seidel iterations converge and
\[
\bigl\|\bar\x - \x^{(t+1)}\bigr\|_\infty
\leq \frac{\max_i w_i}{\min_i
w_i}\frac{\displaystyle \max_k\frac{\sigma
-a_{kk}}{s-a_{kk}}}{\displaystyle 1-\max_k\frac{\sigma
-a_{kk}}{s-a_{kk}}}
\bigl\|\x^{(t+1)} - \x^{(t)}\bigr\|_\infty.
\]
\end{corollary}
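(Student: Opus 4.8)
The plan is to derive this directly from Theorem~\ref{teo:conv} in the \Gauss--Seidel case $\omega=1$, simply by translating between the $\w$-norm and the ordinary supremum norm via the elementary comparisons collected in Proposition~\ref{prop:wnorm}. Setting $\omega=1$ in Theorem~\ref{teo:conv} gives the contraction factor $r=\max_k\frac{\sigma-a_{kk}}{s-a_{kk}}$, which is $<1$ (so that, since $\omega=1$ lies in the admissible interval, convergence holds), and the ``moreover'' inequality of that theorem reads
\[
\bigl\|\bar\x-\x^{(t+1)}\bigr\|_\infty^\w\leq\frac r{1-r}\bigl\|\x^{(t+1)}-\x^{(t)}\bigr\|_\infty^\w.
\]
It then remains only to bound the left-hand side of the claimed inequality by a multiple of $\bigl\|\bar\x-\x^{(t+1)}\bigr\|_\infty^\w$, and the $\w$-norm on the right-hand side above by a multiple of $\bigl\|\x^{(t+1)}-\x^{(t)}\bigr\|_\infty$.

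For the first, part~\ref{en:max} of Proposition~\ref{prop:wnorm} gives $\|\x\|_\infty\leq(\max_i w_i)\|\x\|_\infty^\w$ for every $\x$; for the second, part~\ref{en:min} gives $(\min_i w_i)\|\x\|_\infty^\w\leq\|\x\|_\infty$, that is, $\|\x\|_\infty^\w\leq\frac1{\min_i w_i}\|\x\|_\infty$. Applying the former to $\x=\bar\x-\x^{(t+1)}$ and the latter to $\x=\x^{(t+1)}-\x^{(t)}$, and inserting the displayed inequality between them, yields
\[
\bigl\|\bar\x-\x^{(t+1)}\bigr\|_\infty\leq(\max_i w_i)\frac r{1-r}\bigl\|\x^{(t+1)}-\x^{(t)}\bigr\|_\infty^\w\leq\frac{\max_i w_i}{\min_i w_i}\frac r{1-r}\bigl\|\x^{(t+1)}-\x^{(t)}\bigr\|_\infty,
\]
which is exactly the asserted bound.

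There is no genuine obstacle here; the only point requiring a little care is to use each of the two one-sided comparisons in the correct direction, so that they compose into a single valid chain (an upper bound on the supremum-norm error, then the $\w$-norm bound of Theorem~\ref{teo:conv}, then an upper bound on the step in supremum norm) rather than into a circular estimate. It is worth noting, as the surrounding discussion already stresses, that this passage to the supremum norm is genuinely lossy: asking for nothing more about $A$ than a $\sigma$-suitable $\w$ costs precisely the condition-number-like factor $\max_i w_i/\min_i w_i$.
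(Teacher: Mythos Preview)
Your argument is correct and is exactly the approach the paper takes: it cites Proposition~\ref{prop:wnorm}.\ref{en:min} and~\ref{prop:wnorm}.\ref{en:max} applied to the ``moreover'' estimate of Theorem~\ref{teo:conv} with $\omega=1$, precisely as you have spelled out.
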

\begin{proof}
An application of Proposition~\ref{prop:wnorm}.\ref{en:min}
and~\ref{prop:wnorm}.\ref{en:max}.
\end{proof}

We remark that 
\[\max_k\frac{\sigma
-a_{kk}}{s-a_{kk}}\leq \frac\sigma s,\]
so it is possible to restate all results in a simplified (but less powerful)
form.

\section{Practical issues}
In principle it is always better to compute the actual $\w$-norm,
rather than using the rather crude bound of Corollary~\ref{cor:bound}.\footnote{The bound is actually \emph{very} crude, in particular on
reducible matrices when $\sigma$ is close to $\rho(A)$.} On the other
hand, computing the $\w$-norm requires storing and accessing $\w$, which could be expensive. 

In practice, it is convenient to restrict oneself to vectors $\w$ satisfying $\|\w\|_\infty=1$, as in that case
$\|\x\|_\infty\leq\|\x\|_\infty^\w$, and for some $\x$ we actually have equality.
Then, we can store in few bits an approximate vector $\w'\leq \w$, which can be used 
to estimate $\bigl\|\bar\x^{(t+1)} - \x^{(t)}\bigr\|_\infty^\w$, as we have, using the notation of Theorem~\ref{teo:conv},
\[
\bigl\|\bar\x - \x^{(t+1)}\bigr\|_\infty\leq \bigl\|\bar\x - \x^{(t+1)}\bigr\|_\infty^\w\leq \frac r{1-r} \bigl\|\x^{(t+1)}
- \x^{(t)}\bigr\|_\infty^\w
\leq 
\frac r{1-r} \bigl\|\x^{(t+1)} -
\x^{(t)}\bigr\|_\infty^{\w'}.
\]
A reasonable choice is that of keeping in memory $\lceil -\log_2
w_i\rceil$. Using a byte of storage we can keep track of $w_i$'s no smaller than
Moreover, during the evaluation of the norm we just have to multiply by a power
of two, which can be done very quickly in IEEE 754 format.

\section{Choosing a suitable vector}
\label{sec:choosing}

We now come to the main result: given a nonnegative matrix $A$ 
and a $\sigma>\rho(A)$, it is possible (constructively) to compute a vector
$\w$ that is $\sigma$-suitable for $A$. In essence, the
computation of a $\sigma$-suitable vector for $A$ ``tames'' the
non-normality of the iterative process, at the price of a reduction of the
convergence range.

\begin{theorem}
\label{teo:suitable}
Let $A$ be nonnegative and $\sigma>\rho(A)$. Let
\[
\w_\sigma^{(k)} = \sum_{i=0}^k \biggl(\frac A{\sigma}\biggr)^i\1
\]
and
\[
\w_\sigma =\lim_{k\to\infty}\w_\sigma^{(k)}
\]
Then, $A\w_\sigma<\sigma\w_\sigma$. In particular, $\w_\sigma$ is
$\sigma$-suitable for $A$, and there is a $k$ such that
\[
A\w_\sigma^{(k)}\leq
\sigma\w_\sigma^{(k)},
\]
so $\w_\sigma^{(k)}$ is
$\sigma$-suitable for $A$.
\end{theorem}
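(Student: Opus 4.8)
The plan is to exploit the Neumann-series structure of $\w_\sigma^{(k)}$ directly. Since $\sigma>\rho(A)$, the matrix $A/\sigma$ has spectral radius strictly less than $1$, so the Neumann series $\sum_{i\ge0}(A/\sigma)^i$ converges entrywise to $(I-A/\sigma)^{-1}$; this immediately gives that $\w_\sigma=\lim_k\w_\sigma^{(k)}$ is well-defined, finite, and (being a sum of nonnegative terms with the $i=0$ term equal to $\1>0$) satisfies $\w_\sigma\ge\1>0$. So $\w_\sigma$ is a genuine positive vector, which is the first thing we need.

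\medskip
Next I would compute $A\w_\sigma$ by a telescoping/shift argument. We have
\[
\frac{A}{\sigma}\w_\sigma=\sum_{i=0}^\infty\Bigl(\frac A\sigma\Bigr)^{i+1}\1=\sum_{i=1}^\infty\Bigl(\frac A\sigma\Bigr)^{i}\1=\w_\sigma-\1,
\]
so $A\w_\sigma=\sigma(\w_\sigma-\1)=\sigma\w_\sigma-\sigma\1<\sigma\w_\sigma$, the strict inequality holding coordinatewise because $\sigma>0$ and $\1>0$. This proves $A\w_\sigma<\sigma\w_\sigma$ and hence that $\w_\sigma$ is $\sigma$-suitable for $A$.

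\medskip
For the finite-$k$ statement, the idea is that $\w_\sigma^{(k)}\to\w_\sigma$ and the strict inequality $A\w_\sigma<\sigma\w_\sigma$ has ``room'' that survives a small perturbation. Concretely, $A\w_\sigma^{(k)}=\sum_{i=0}^k(A/\sigma)^{i+1}\sigma\1=\sigma\bigl(\w_\sigma^{(k+1)}-\1\bigr)=\sigma\w_\sigma^{(k)}-\sigma\1+\sigma\bigl(\w_\sigma^{(k+1)}-\w_\sigma^{(k)}\bigr)$. Thus the desired inequality $A\w_\sigma^{(k)}\le\sigma\w_\sigma^{(k)}$ is equivalent to $\sigma\bigl(\w_\sigma^{(k+1)}-\w_\sigma^{(k)}\bigr)\le\sigma\1$, i.e. to $(A/\sigma)^{k+1}\1\le\1$ coordinatewise. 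Since $\rho(A/\sigma)<1$, the powers $(A/\sigma)^m$ tend to $0$ entrywise, so $(A/\sigma)^{k+1}\1\to\bm 0$ and in particular its entries are eventually $\le1$; any such $k$ works. (One can also phrase this via the $\w_\sigma$-norm: $\|(A/\sigma)^{k+1}\1\|_\infty^{\w_\sigma}\le\|(A/\sigma)^{k+1}\|_\infty^{\w_\sigma}\|\1\|_\infty^{\w_\sigma}$ and, by Proposition~\ref{prop:wnorm}.\ref{en:suitable} applied to $\w_\sigma$, $\|A/\sigma\|_\infty^{\w_\sigma}\le1$ isn't quite enough for decay, so the entrywise argument via $\rho(A/\sigma)<1$ is the clean route.)

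\medskip
The only real subtlety — the ``hard part'' — is justifying that $(A/\sigma)^m\to 0$ entrywise from $\rho(A/\sigma)<1$ without assuming diagonalizability or irreducibility; this is the standard fact that $\rho(B)<1$ implies $B^m\to0$ (e.g. via the Jordan form, or via Gelfand's formula $\rho(B)=\lim\|B^m\|^{1/m}$ giving $\|B^m\|\le c\theta^m$ for some $\theta\in(\rho(B),1)$), and it is exactly here that the hypothesis $\sigma>\rho(A)$ (strict) is used — with $\sigma=\rho(A)$ the series may diverge. Everything else is the telescoping identity above plus taking limits of entrywise-monotone nonnegative sequences.
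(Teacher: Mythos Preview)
Your proof is correct and takes a genuinely different, more elementary route than the paper's. The paper argues indirectly: it perturbs $A$ to the strictly positive matrix $A+\delta\1\1^*$, invokes Perron--Frobenius to get a positive dominant eigenvector, uses continuity of the spectral radius to pick $\delta_\sigma$ with $\rho(A+\delta_\sigma\1\1^*)=\sigma$, and then solves the eigenvector equation to discover that (up to scaling) this eigenvector \emph{is} the Neumann series $\sum_{i\ge0}(A/\sigma)^i\1$; the strict inequality $A\w_\sigma<\sigma\w_\sigma$ then falls out with gap $\delta_\sigma\|\w_\sigma\|_1\1$. You bypass all of this by observing the telescoping identity $(A/\sigma)\w_\sigma=\w_\sigma-\1$ directly, which yields the same strict inequality with the explicit gap $\sigma\1$---no Perron--Frobenius, no perturbation, no continuity argument needed. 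For the finite-$k$ part, the paper simply appeals to $\w_\sigma^{(k)}\to\w_\sigma$ together with strictness of the limit inequality, while you go further and reduce $A\w_\sigma^{(k)}\le\sigma\w_\sigma^{(k)}$ to the concrete entrywise condition $(A/\sigma)^{k+1}\1\le\1$, which is both a cleaner justification and a usable stopping criterion. What the paper's approach buys is a structural explanation of \emph{where} $\w_\sigma$ comes from (a Perron vector of a rank-one perturbation), which motivates the later remark that other dyadic perturbations $\bm u\bm v^*$ might work; your approach buys simplicity and an explicit finite-$k$ test.

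One small aside: your parenthetical remark that $\|A/\sigma\|_\infty^{\w_\sigma}\le1$ ``isn't quite enough for decay'' is slightly pessimistic---from your own identity $A\w_\sigma=\sigma\w_\sigma-\sigma\1$ and Proposition~\ref{prop:wnorm}.\ref{en:Aw} one actually gets $\|A/\sigma\|_\infty^{\w_\sigma}=\max_i(1-1/w_{\sigma,i})<1$ strictly, so the $\w_\sigma$-norm route would also work. But your entrywise argument via $\rho(A/\sigma)<1\Rightarrow(A/\sigma)^m\to0$ is perfectly fine and arguably more transparent.
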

\begin{proof}
Consider the matrix $A+\delta\1\1^*$, where $\delta>0$. Since it is strictly
positive, the Perron--Frobenius Theorem tells us that there is a 
dominant eigenvector $\w_\delta>0$. Moreover, since for $\delta\to\infty$
we have $\rho(A+\delta\1\1^*)\to\infty$, and the spectral radius is continuous in the
matrix entries, there must be a $\delta_\sigma$ such that
\[
\rho\bigl(A+\delta_\sigma\1\1^*\bigr)=\sigma.
\]
We have
\begin{align*}
\bigl(A+\delta_\sigma\1\1^*\bigr)\w_{\delta_\sigma}&=\sigma\w_{\delta_\sigma}\\
A\w_{\delta_\sigma}+\delta_\sigma\|\w_{\delta_\sigma}\|_1\1
&=\sigma\w_{\delta_\sigma}\\
\frac{\delta_\sigma\|\w_{\delta_\sigma}\|_1}{\sigma}\1
&=\biggl(1-\frac A{\sigma}\biggr)\w_{\delta_\sigma}\\
\w_{\delta_\sigma}&=\frac{\delta_\sigma\|\w_{\delta_\sigma}\|_1}{\sigma}\sum_{i=0}^\infty\biggl(\frac
A{\sigma}\biggr)^i \1.
\end{align*}
We now observe that the scaling factor is irrelevant: $\w_{\delta_\sigma}$ is an
eigenvector, so it is defined up to a multiplicative constant. We can thus just write
\[
\w_\sigma =\sum_{i=0}^\infty\biggl(\frac
A{\sigma}\biggr)^i \1
\]
and state that
\[
\bigl(A+\delta_\sigma\1\1^*\bigr)\w_\sigma =\sigma\w_\sigma,
\]
which implies
\[
A\w_\sigma =\sigma\w_\sigma  - \delta_\sigma\|\w_\sigma\|_1\1<\sigma\w_\sigma.
\]
Thus, as $\w_\sigma^{(k)}\to \w_\sigma$ when $k\to \infty$, for some
$k$ we must have
\[
A\w_\sigma ^{(k)}\leq\sigma\w_\sigma ^{(k)}.
\]
\end{proof}

The previous theorem suggests the following procedure. Under the given
hypotheses, start with $\w^{(0)}=\1$, and iterate
\begin{align*}
\bm z&=A \w^{(t)}\\
\w^{(t+1)}&=\bm z/\sigma+\1.
\end{align*}
Note that this is just a Jacobi iteration for the problem $(I-A/\sigma)\bm
x=\1$, which is natural, as $\w_\sigma$ is just its solution.
The iteration stops as soon as
\begin{equation}
\label{eq:norm}
\max_i \frac{z_i}{w^{(t)}_i}\leq \sigma,
\end{equation}
and at that point $\w^{(t)}$ is by definition $\sigma$-suitable for $A$, 
so we can apply Theorem~\ref{teo:conv}. 

In practice, it is useful to keep the current vector
$\w^{(t)}$ normalized: just set $s^{(0)}=1$ at the start, and then iterate
\begin{align*}
\bm z&=A \w^{(t)}\\
\bm u&=\bm z/\sigma+s^{(t)}\1\\
s^{(t+1)} &= s^{(t)} / \|\bm u\|_\infty\\
\w^{(t+1)}&=\bm u / \|\bm u\|_\infty.
\end{align*}

We remark that, albeit used for clarity in the statement of Theorem~\ref{teo:suitable}, the (exact) knowledge of $\rho(A)$ is
not strictly necessary to apply the technique above: indeed, if the procedure terminates $\sigma\geq\rho(A)$ by
Proposition~\ref{prop:wnorm}.  

There are a few useful observations about the behavior of the normalized
version of the procedure. First, if $\sigma < \rho(A)$ necessarily $s^{(t)}\to 0$ as
$t\to \infty$. Second, by Collatz's classical bound~\cite{ColECZM},
the maximum in~(\ref{eq:norm}) is an upper bound to $\rho(A)$. This happens
without additional hypotheses\footnote{We report the following two easy proofs
as in most of the literature Collatz's bounds are proved for irreducible
matrices using Perron--Frobenius theory.} on $A$ because whenever $A\x\leq
\gamma\x$ with $\x>0$ we have
\[
\rho(A)\leq \|A\|^\x_\infty=\|A\x\|^\x_\infty\leq\|\gamma \x\|^\x_\infty=\gamma.
\]
If, moreover, we compute also the minimum ratio
\begin{equation}
\label{eq:min}
\min_i \frac{z_i}{w^{(t)}_i},
\end{equation}
this is a lower bound to $\rho(A)$, again without additional hypotheses on $A$.
Indeed, note that whenever $\beta\x\leq A\x$ with
$\x\geq0$, for every $\delta>0$ if $\w$ is a positive eigenvector of
$A+\delta\1\1^*$ we have
\[
\beta\x\leq A\x\leq (A+\delta\1\1^*)\x\leq
(A+\delta\1\1^*)\|\x\|^\w_\infty \w = \rho(A+\delta\1\1^*)\|\x\|^\w_\infty \w.
\]
The last inequality implies $\beta\leq\rho(A+\delta\1\1^*)$ by
Proposition~\ref{prop:wnorm}.\ref{en:mindef}, and since the inequality is
true for every $\delta$ it is true by continuity also for $\delta=0$.

These properties suggest that in practice iteration should be stopped if
$s^{(t)}$ goes below the minimum representable floating-point number: in this case, either
$\sigma<\rho(A)$, or the finite precision at our disposal is not sufficient to
compute a suitable vector because we cannot represent correctly a transient
behavior of the powers of $A$.

If instead the minimum~(\ref{eq:min}) becomes larger than $\sigma$, we can
safely stop: unfortunately, the latter event cannot be guaranteed to happen when
$\sigma<\rho(A)$ without additional hypotheses on $A$ (e.g., irreducibility):
for instance, if $A$ has a null row the minimum~(\ref{eq:min}) will always be equal to zero.


Of course, there ain't no such thing as a free lunch. The termination
of the process above is guaranteed if $\sigma>\rho(A)$, but we have no indication of how many step will be
required. Moreover, in principle some of the coordinates of the suitable vector could be so small 
to make Theorem~\ref{teo:conv}
unusable. For $\sigma$ close to $\rho(A)$
convergence can be very slow, as it is related to the convergence of Collatz's
lower and upper bounds for the dominant eigenvalue.

Nonetheless, albeit all of the above must happen in pathological cases, we show on
a few examples that, actually, in real-world cases computing a $\sigma$-suitable
vector is not difficult.

We remark that in principle any dyadic product $\bm u\bm v^*$ such that $A+\bm u
\bm v^*$ is irreducible will do the job in the proof of
Theorem~\ref{teo:suitable}. There might be choices (possibly depending on
$A$) for which the computation above terminates more quickly.

\section{Examples}

\subsection{Bounding the error of $(I-A)\bm x=\bm b$}

If $A$ is nonnegative matrix with $\rho(A)<1$, then $I-A$ is invertible and the 
problem $(I-A)\bm x=\bm b$ has a unique solution, and in the limit we have
convergence geometric in $\rho(A)$. However, if we choose a
$1>\sigma>\rho(A)$ (say, $\sigma=(1+\rho(A))/2$) and a $\sigma$-suitable vector $\w$, the bounds of
Theorem~\ref{teo:conv} will be valid, so we will be
able to control the error in $\w$-norm.

\subsection{Katz's index}

Let $M$ be a nonnegative matrix (in the standard formulation, the adjacency
matrix of a graph). Then, given $\alpha<1/\rho(M)$ Katz's index is defined by
\[
\bm k^*= \bm v^*\bigl(1-\alpha M\bigr)^{-1}= \bm v^*\sum_{k\geq
0}\alpha^kM^k,
\]
where $\bm v$ is a \emph{preference vector}, which is just $\mathbf 1$ in Katz's
original definition~\cite{KatNSIDSA}.\footnote{We must note that actually Katz's index is 
$\bm v^*\bigl(1-\alpha M\bigr)^{-1}M$. This additional multiplication by
$M$ is somewhat common in the literature; it is probably a case of
\textit{horror vacui}.}.

If we want to apply Theorem~\ref{teo:conv}, we must choose a $\sigma>\rho(A)$
and a $\sigma$-suitable vector $\w$ for $A$. The vector can then be used to
accurately estimate the computation of Katz's index for all
$\alpha<1/\sigma$. This property is particularly useful, as it is common to
estimate the index for different values of $\alpha$, and to that purpose it is
sufficient to compute once for all a $\sigma$-suitable vector for  
a $\sigma$ chosen sufficiently close to $\rho(A)$.

\subsection{PageRank} The case of PageRank is similar to Katz's index. We have
\[ \bm r^*= (1-\alpha)\bm v^*(1-\alpha P)^{-1}= (1-\alpha)\bm
v^*\sum_{k=0}^\infty\alpha^kP^k, \] where $\bm v$ is the preference vector, and
$P=\bar G+\bm d\bm u^*$ is a stochastic matrix; $\bar G$ is the adjacency matrix
of a graph $G$, normalized so that each nonnull row adds to one, $\bm d$ is the
characteristic vector of \emph{dangling nodes} (nodes without outlinks, i.e.,
null rows), and $\bm u$ is the dangling-node distribution, used to redistribute
the rank lost through dangling nodes. It is common to use a uniform $\bm u$, but
most often $\bm u=\bm v$, and in that case we speak of \emph{strongly
preferential} PageRank~\cite{BSVPFD}.

We remark that in the latter case it is well known that the \emph{pseudorank}
\[
\bm p^*= (1-\alpha)\bm v^*\sum_{k=0}^\infty\alpha^k\bar G^k 
\]
satisfies
\[
\bm r = \frac{\bm p}{\|\bm p\|_1}.
\]
That is, PageRank and the pseudorank are parallel vectors. This is relevant for
the computation of several strongly preferential PageRank vectors: just
compute a $\sigma$-suitable vector for $\bar G$ (rather than
one for each $\bar G+\bm d\bm v^*$, depending on $\bm v$), and compute
pseudoranks instead of ranks.

The case of PageRank is however less interesting because, as David Gleich made
the author note, assuming the notation of Section~\ref{sec:SOR} and $\omega=1$
\begin{align*}
 \bigl(1-\alpha P^T\bigr)\bm x^{(t+1)} -(1-\alpha)\bm v &= \bigl(D- L^{(t)}-
 R^{(t)}\bigr)\bm x^{(t+1)}  -(1-\alpha)\bm v\\
&=  \bigl(D- L^{(t)}\bigr)\bm x^{(t+1)} - R^{(t)}\bm x^{(t+1)}  -(1-\alpha)\bm
 v\\
&=  R^{(t)}\bm x^{(t)}+ (1-\alpha)\bm
 v - R^{(t)}\bm x^{(t+1)}  -(1-\alpha)\bm
 v\\
 &= R^{(t)}(\bm x^{(t)}-\bm x^{(t+1)}).
\end{align*}
Since $\bigl\|R^{(t)}\bigr\|_1\leq \alpha$, we can $\ell_1$-bound the residual
\[
\Bigl\|\bigl(1-\alpha P^T\bigr)^{-1}\Bigr\|_1 =
\Bigl\|\sum_{k=0}^\infty\alpha^k\bigl(P^T\bigr)^k\Bigr\|_1\leq
\frac1{1-\alpha}
\]
we conclude that
\[
\bigl\|\bar{\bm x} -\bm x^{(t+1)}\bigr\|_1\leq\frac\alpha{1-\alpha}\bigl\|\bm
x^{(t+1)}-\bm x^{(t)}\bigr\|_1.
\]
It is thus possible, albeit wasteful, to bound the supremum norm of the error
using its $\ell_1$ norm.

\section{Experiments}

In this section we discuss some computational experiments involving the computation of PageRank
and Katz's index on real-world graphs. We focus on a snapshot of the English version of Wikipedia taken in 2013
(about four million nodes and one hundred million arcs) and a snapshot of the \texttt{.uk} web domain
taken in may 2007 (about one hundred million nodes and almost four billion arcs).\footnote{Both datasets are
publicly available at the site of the Laboratory for Web Algorithmics (\texttt{http://law.di.unimi.it/}) under the identifiers
\texttt{enwiki-2013} and \texttt{uk-2007-05}.} These two graphs have some structural differences, which we highlight
in Table~\ref{tab:datasets}.

\begin{table}
\centering
\begin{tabular}{l|rr}
& \multicolumn{1}{c}{Wikipedia} & \multicolumn{1}{c}{\texttt{.uk}}\\
\hline
nodes & $4\,206\,785$ & $105\,896\,555$ \\
arcs & $101\,355\,853$ & $3\,738\,733\,648$\\
avg. degree & $24.093$ & $35.306$\\
giant component & $89.00\%$ & $64.76\%$ \\
harmonic diameter & $5.24$ & $22.78$\\
dominant eigenvalue & $191.11$ & $5676.63$ \\  
\end{tabular}
\caption{\label{tab:datasets}Basic structural data about our two datasets.}
\end{table}

We applied the procedure described in Section~\ref{sec:choosing} to the system associated with PageRank
and Katz's index, with $\sigma\in\{\,1/(1-2^{-i})\mid 1\leq i\leq 10\,\}$ for PageRank and
$\sigma\in\{\,\lambda/(1-2^{-i})\mid 1\leq i\leq 10\,\}$ for Katz's index.

\begin{figure}
\centering
\includegraphics[scale=.6]{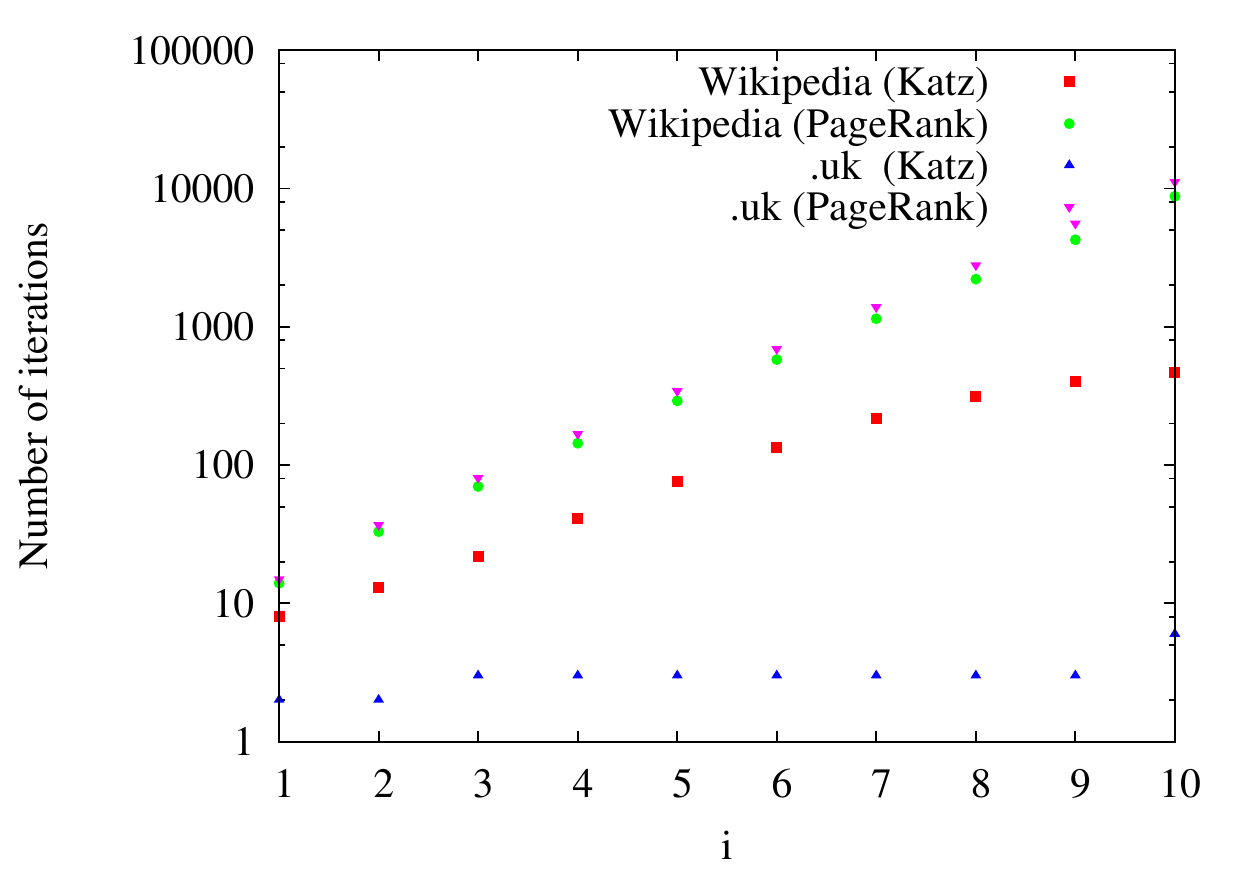}
\caption{\label{fig:iter}Number of iterations that are necessary to compute a $\lambda/(1-2^{-i})$-suitable vector.}
\end{figure}

In Figure~\ref{fig:iter} we report the number of iterations that are necessary to compute the $i$-th suitable
vector. The two datasets show the same behavior in the case of PageRank---an exponential
increase in the number of iterations as we get exponentially closer to the limit value. The case of Katz is more varied: whereas Wikipedia has a significant
growth in the number of iterations (but clearly slower than the PageRank case), \texttt{.uk} has a minimal variation 
across the range (from $2$ to $6$).  

\begin{figure}[htb]
\centering
\includegraphics[scale=.45]{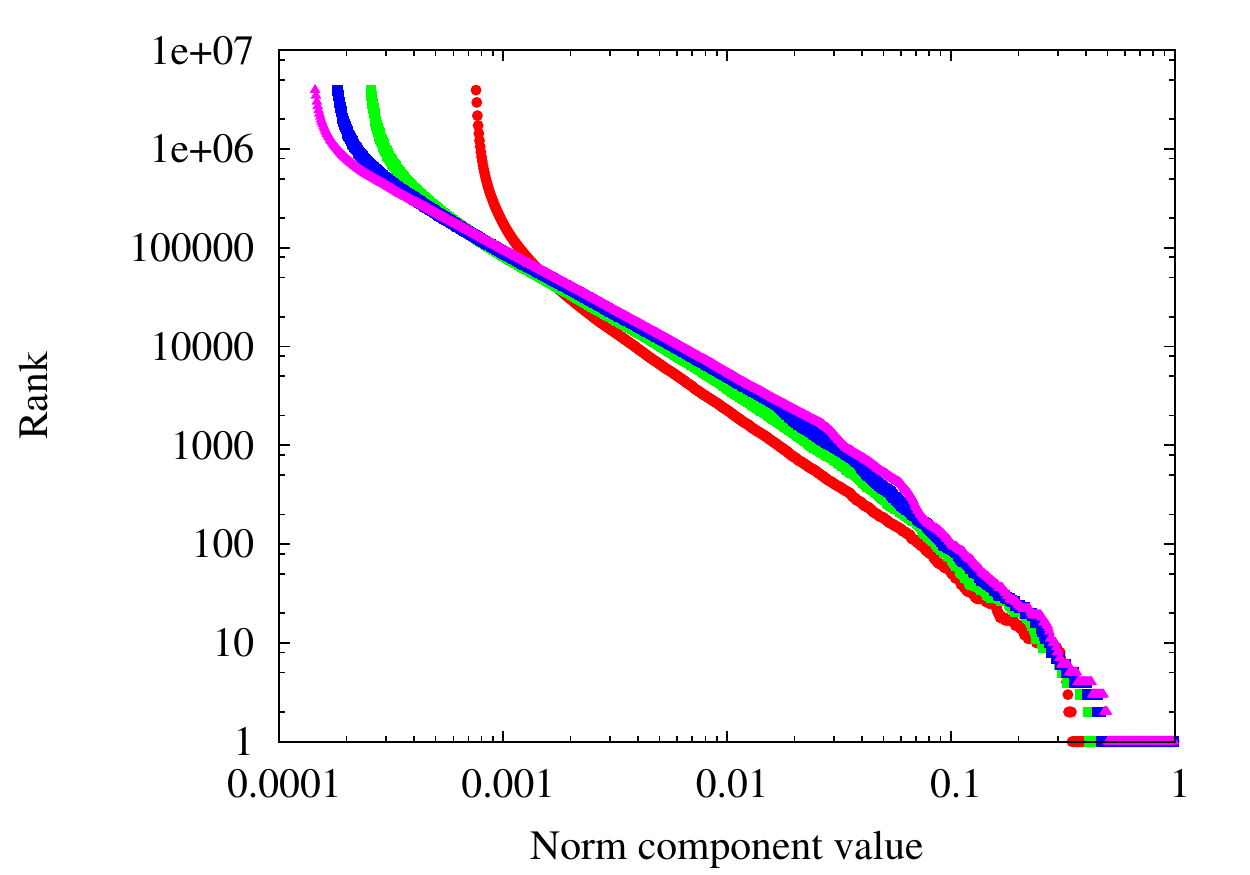}\quad\includegraphics[scale=.45]{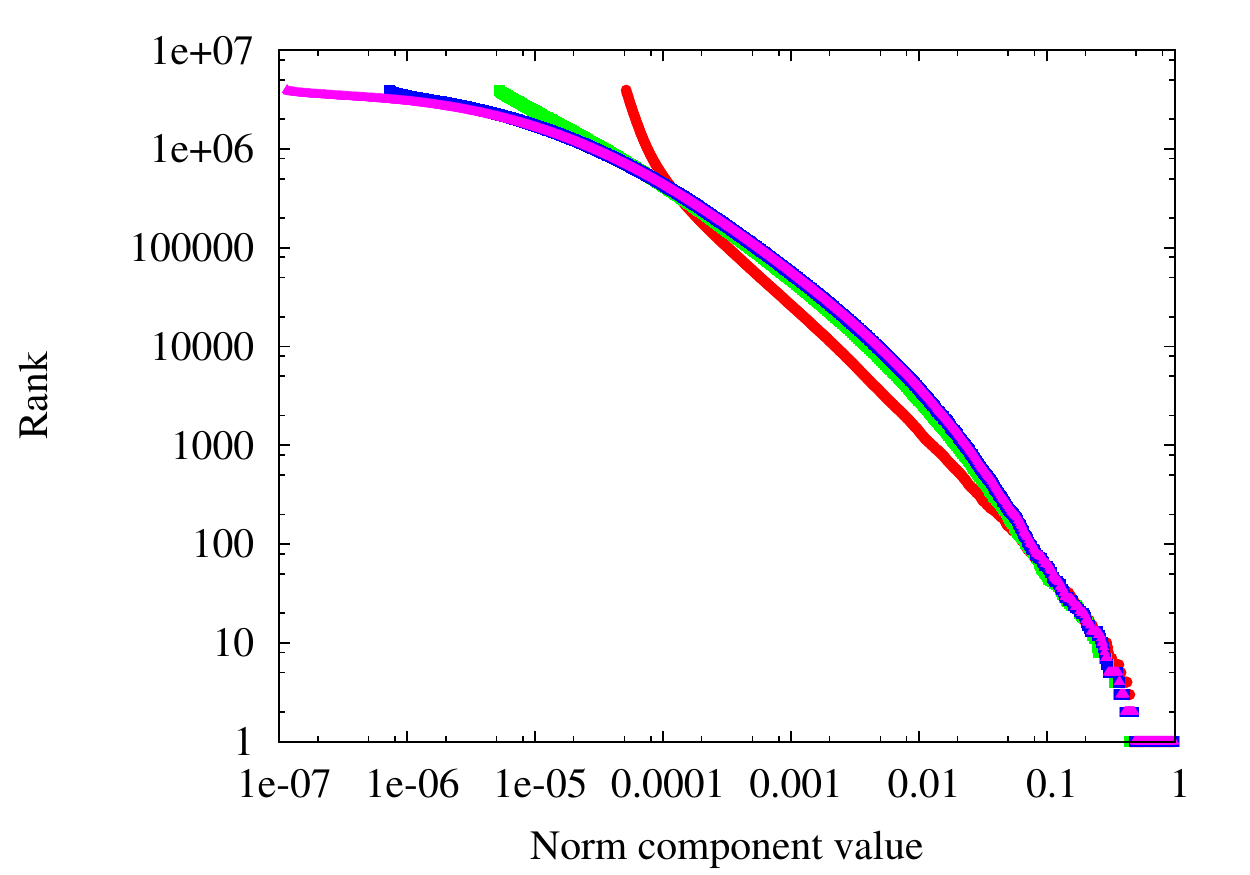}\\
\includegraphics[scale=.45]{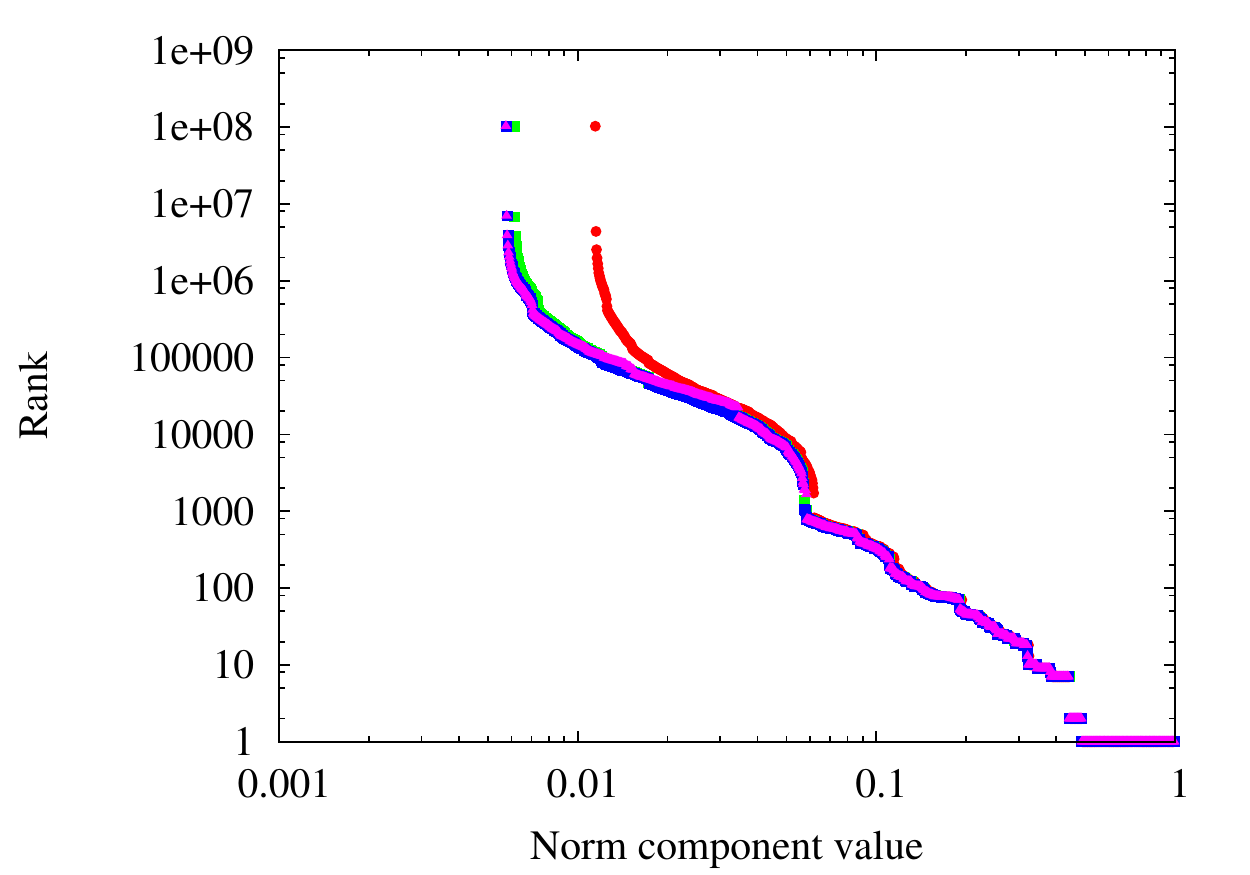}\quad\includegraphics[scale=.45]{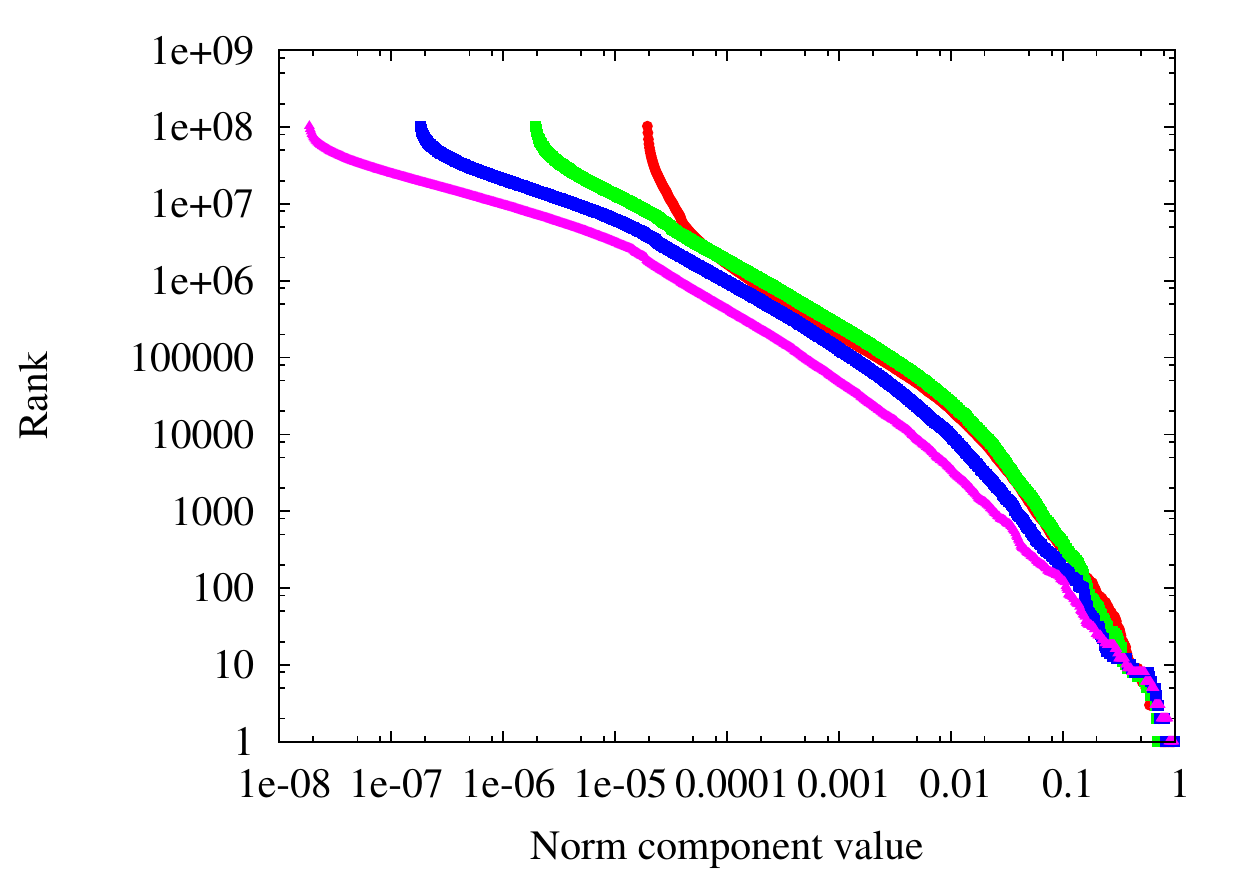}\\
\caption{\label{fig:norm}Exponentially binned frequency plots of the values of $\lambda/(1-2^{-i})$-suitable vectors, $i=1$, $4$, $7$ and $10$.}
\end{figure}

In Figure~\ref{fig:norm} we draw the (exponentially binned) distribution of values of suitable vectors for
a choice of four equispaced values of $i$. The vectors are normalized in $\ell_\infty$ norm, that is,
the largest value is one.

The shape of the distribution depends both on the graph
and on the type of centrality computed, but two features are constant: first, as we approach $\lambda$ the distribution
contains smaller and smaller values; second, the smallest value in the PageRank case is several orders of magnitude smaller.

Smaller values imply a larger $\w$-norm: indeed, one can think of the elements of an $\ell_\infty$-normalized suitable
vector $\w$ as weights that ``slow down'' the convergence of problematic nodes by inflating their raw error. The intuition
we gather from the distribution of values is that bounding the convergence of PageRank is more difficult. 

\section{Conclusions}

We have presented results that make it possible to bound the supremum norm of
the absolute error of SOR iterations an $M$-matrix $sI-A$ even when estimating
$\bigl\|(sI-A)^{-1}\bigr\|_\infty$ is not feasible. Rather than
 relying on additional hypotheses such as positive definiteness, irreducibility
 and so on, our results suggest to
compute first a $\sigma$-\emph{suitable} positive vector $\w$ with the property
that SOR iterations converge geometrically in $\w$-norm by a computable factor.

While we cannot bound without additional
hypotheses the resources (number of iterations and precision) that are necessary
to compute $\w$, in practice the computation is not difficult, and given an
$M$-matrix $sI-A$ the associated $\sigma$-suitable $\w$ can be used for all
$s>\sigma$.  

\hyphenation{ Vi-gna Sa-ba-di-ni Kath-ryn Ker-n-i-ghan Krom-mes Lar-ra-bee
  Pat-rick Port-able Post-Script Pren-tice Rich-ard Richt-er Ro-bert Sha-mos
  Spring-er The-o-dore Uz-ga-lis }


\begin{thebibliography}{PBMW98}

\bibitem[BP94]{BePNMMS}
Abraham Berman and Robert~J. Plemmons.
\newblock {\em Nonnegative Matrices in the Mathematical Sciences}.
\newblock Classics in Applied Mathematics. SIAM, 1994.

\bibitem[BPSV08]{BPSTPTP}
Paolo Boldi, Roberto Posenato, Massimo Santini, and Sebastiano Vigna.
\newblock Traps and pitfalls of topic-biased {P}age{R}ank.
\newblock In William Aiello, Andrei Broder, Jeannette Janssen, and Evangelos
  Milios, editors, {\em WAW 2006. Fourth Workshop on Algorithms and Models for
  the Web-Graph}, volume 4936 of {\em Lecture Notes in Computer Science}, pages
  107--116. Springer--Verlag, 2008.

\bibitem[BSV09]{BSVPFD}
Paolo Boldi, Massimo Santini, and Sebastiano Vigna.
\newblock Page{R}ank: {F}unctional dependencies.
\newblock {\em ACM Trans. Inf. Sys.}, 27(4):1--23, 2009.

\bibitem[BT89]{BeTPDCNM}
Dimitri~P. Bertsekas and John~N. Tsitsiklis.
\newblock {\em Parallel and Distributed Computation: Numerical Methods}.
\newblock Prentice Hall, Englewood Cliffs NJ, 1989.

\bibitem[Col42]{ColECZM}
Lothar Collatz.
\newblock Einschlie{\ss}ungssatz f{\"u}r die charakteristischen {Z}ahlen von
  {M}atrizen.
\newblock {\em Mathematische Zeitschrift}, 48(1):221--226, 1942.

\bibitem[GK11]{GrKNCSPP}
Chen Greif and David Kurokawa.
\newblock A note on the convergence of {SOR} for the {P}age{R}ank problem.
\newblock {\em SIAM J. Sci.~Computing}, 33(6):3201--3209, 2011.

\bibitem[Kat53]{KatNSIDSA}
Leo Katz.
\newblock A new status index derived from sociometric analysis.
\newblock {\em Psychometrika}, 18(1):39--43, 1953.

\bibitem[Ken45]{KenTTRP}
Maurice~G. Kendall.
\newblock {The treatment of ties in ranking problems}.
\newblock {\em Biometrika}, 33(3):239--251, 1945.

\bibitem[PBMW98]{PBMPCR}
Lawrence Page, Sergey Brin, Rajeev Motwani, and Terry Winograd.
\newblock The {P}age{R}ank citation ranking: Bringing order to the web.
\newblock Technical Report SIDL-WP-1999-0120, Stanford Digital Library
  Technologies Project, Stanford University, 1998.

\bibitem[Vig09]{VigSR}
Sebastiano Vigna.
\newblock Spectral ranking.
\newblock {\em CoRR}, abs/0912.0238, 2009.

\end{thebibliography}
\end{document}